\newcommand{\Tr}{\operatorname{Tr}}
\newcommand{\be}{\begin{equation}}
\newcommand{\ee}{\end{equation}}
\newcommand{\ba}{\begin{eqnarray}}
\newcommand{\ea}{\end{eqnarray}}
\newcommand{\ketbra}[2]{|#1\rangle \langle #2|}
\newtheorem{theorem}{Theorem}
\newtheorem{corollary}{Corollary}
\newtheorem{definition}{Definition}
\newtheorem{proposition}{Proposition}
\newtheorem{observation}{Observation}
\newtheorem{lemma}{Lemma}
\newenvironment{theorem'}
 {\expandafter\def\expandafter\thetheorem\expandafter{\thetheorem'}\theorem}
 {\endtheorem}
\begin{document}

\title{Local Inaccessibility of Random Classical Information and Their Implications in the Change Point Problem} 
\author{Snehasish Roy Chowdhury}
\affiliation{Physics and Applied Mathematics Unit, Indian Statistical Institute, 203 B.T. Road, Kolkata 700108, India.}
\author{Subhendu B. Ghosh}
\email{subhendubghosh@gmail.com}
\affiliation{Department of Physics of Complex Systems, S. N. Bose National Center for Basic Sciences,
Block JD, Sector III, Salt Lake, Kolkata 700106, India}
\author{Tathagata Gupta}
\affiliation{Physics and Applied Mathematics Unit, Indian Statistical Institute, 203 B.T. Road, Kolkata 700108, India.}
\author{Anandamay Das Bhowmik}
\affiliation{School of Physics, IISER Thiruvananthapuram, Vithura, Kerala 695551, India}
\author{Sutapa Saha}
\affiliation{Harish-Chandra Research Institute, HBNI, Chhatnag Road, Jhunsi, Allahabad 211 019, India}
\author{Some Sankar Bhattacharya}
\affiliation{Física Teòrica: Informació i Fenòmens Quàntics, Universitat Autònoma de Barcelona, 08193 Bellaterra, Spain}
\author{Tamal Guha}
\email{g.tamal91@gmail.com}
\affiliation{QICI Quantum Information and Computation Initiative, School of Computing and Data Sciences, The University of Hong Kong, Pokfulam Road, Hong Kong}
\affiliation{Cryptology and Security Research Unit, Indian Statistical Institute, 203 B.T. Road, Kolkata 700108, India}

\begin{abstract} 
Discrimination of quantum states under local operations and classical communication (LOCC) is an intriguing question in the context of local retrieval of classical information, encoded in the multipartite quantum systems. All the local quantum state discrimination premises, considered so far, mimic a basic communication set-up, where the spatially separated decoding devices are independent of any additional input. Here, exploring a generalized communication scenario, we introduce a framework for input-dependent local quantum state discrimination, which we call \textit{local random authentication} (LRA). We report that impossibility of LRA certifies the presence of entangled states in the ensemble, a feature absent from erstwhile nonlocality arguments based on local state discrimination. Additionally, we explore the salient features of this state discrimination prototype for arbitrary set of orthogonal quantum states and compare them with the traditional notion of local quantum state discrimination. Finally, our results reveal a fundamental information-theoretic implications in the local estimation of quantum change point problems.
\end{abstract}
\maketitle
\section{Introduction}
Encoding classical information in quantum systems offers significant advantages over classical schemes, due to the existence of entanglement both in the preparation and the measurement devices \cite{holevo1979capacity, fuchs1997nonorthogonal, holevo1998capacity, hastings2009superadditivity, chiribella2025communication}. However, causal constraints at the recievers' end---such as retrieval of the encoded information locally (by multiple spatially separated receivers)---restrict the preparation of encoding quantum systems from which information can be decoded reliably. This follows from the fact that not every set of multipartite orthogonal quantum states can be discriminated perfectly under local operations and classical communication (LOCC) \cite{bennett1999quantum, walgate2000local, ghosh2001distinguishability, walgate2002nonlocality, ghosh2004distinguishability, watrous2005bipartite, duan2007distinguishing, bandyopadhyay2011locc, chitambar2013local, banik2019multicopy, Yuan2022, wang2025detectors}. In particular, such a restriction has implications in context of classical communication via quantum channel, assisted by a friendly environment \cite{gregoratti2003quantum, hayden2005correcting, watrous2005bipartite, duan2009distinguishability, winter2005environment, chowdhury2025minimal}. As an illustration, for any quantum channel, it is possible to communicate a bit of classical information, whenever the receiver is LOCC-assisted by the environment. This is a simple corollary of the fact that any two pure orthogonal bipartite quantum states can be distinguished perfectly via LOCC \cite{walgate2002nonlocality}. The impossibility of LOCC discrimination, on the other hand, induces a notion of sharing a secret message among distant parties, decoding of which necessitates some (all) of the parties to join in the same lab \cite{hillery1999secret, terhal2001hiding, eggeling2002hiding, matthews2009distinguishability, rahaman2015quantum}. Similarly, other variants of the phenomenon of quantum indistinguishability under causal constraints \cite{bandyopadhyay2014conclusive, halder2019strong, rout2019genuinely, zhang2019strong, shi2020strong, sen2022local, ghosal2022local} can be cast into several other communication scenarios \cite{frenkel2015classical, dall2017no, patra2022classical}. For instance, consider the notion of \textit{local reducibility}, the goal of which is to eliminate at least one of the possible orthogonal preparations under orthogonality preserving LOCC \cite{halder2019strong}. It can be seen as a close cousin of the multipartite version of the communication task considered in \cite{patra2022classical}. In this setting, the spatially separated receivers have to locally exclude the choice made by the sender. The impossibility of such a task, therefore, introduces a stronger notion of secret sharing, where the information about the message can not even be updated locally by eliminating some of the possibilities.\par

Notably, all the communication settings exploiting quantum indistinguishability,  considered so far, can be characterized in the Holevo-Frenkel-Weiner (HFW) scenario \cite{holevo1973bounds, frenkel2015classical}. The decoding measurement(s) performed by the receiver(s), in these settings, is (are) independent of any further classical input. However, there is another information processing paradigm -- the Wiesner-Ambainis scenario \cite{wiesner1983conjugate, ambainis1999dense, ambainis2002dense} -- for which the decoding measurement also depends upon some classical input given to the receiver. Analogous to such a communication model, which is popularly referred as Random Access Coding (RAC)-like communication set-up, we introduce here a local discrimination task of given orthogonal preparations. The task, in addition, depends on a classical input distributed among the spatially separated receivers \cite{comment}. In accordance with the term \textit{nonlocality}, often referring to the impossibility of local state discrimination \cite{bennett1999quantum}, we coin the term "\textit{conditional nonlocality}" to denote the impossibility of the present task. In the following, we first formally pose the input-dependent local state discrimination task.

\section{The task of Local Random Authentication}
Consider the scenario, where the sender (say Alice) encodes the classical information $k\in\{1,\cdots,N\}$ in one of the $n$-partite orthogonal quantum states $\mathcal{S}:=\{\ket{\psi_k}\}_{k=1}^N$ and distributes among $n$ numbers of spatially separated receivers (say Bob). Subsequently, a referee will distribute (broadcast) one among $N$ inputs $\{\mathcal{Q}_i\}_{i=1}^N$ to Bob(s). The spatially separated receivers then, depending on their input, have to answer a single bit $y\in\{0,1\}$ to the referee. While these inputs may reflect many different queries, we will consider a simple case where every $\mathcal{Q}_i$ represents the query "{\tt whether the state sent by Alice is} $\ket{\psi_i}$ {\tt or not}". On the other hand, the bit $y$ generated by Bob(s) will be $0$ for "{\tt No}" and $1$ for "{\tt Yes}" (see Fig.\ref{f1}). The question asked by the referee is randomly sampled over $\{\mathcal{Q}_i\}_{i=1}^N$. Upon answering the questions successfully, Bob(s) locally authenticate whether the encoded classical index is \(i\) or not, which, in turn, boils down to authenticating whether the shared state is $\ket{\psi_i}$ or not. This justifies the name \textit{Local Random Authentication} (LRA) for the present task. Here it is important to mention that every question \(\mathcal{Q}_i\) corresponds to the task of authenticating the state \(\ket{\psi_i}\) locally. However, akin to the input dependent communication scenario, there is no correlation between the query index \(i\) in \(\mathcal{Q}_i\) and the classical information index \(k\) in the encoded state \(\ket{\psi_k}\), meaning that the additional classical input at the receivers' end bears independent randomness than that of the classical message encoded at the sender's end. In other words, there is no correspondence between the multipartite state shared among the spatially separated receivers and question they have been asked. In the rest of this letter, we will use the terms  \textit{question asked among \(\{\mathcal{Q}_i\}_{i=1}^N\)} and \textit{local authentication of the state among \(\{\ket{\psi_i}\}_{i=1}^N\)} interchangeably. At this juncture, it is important to define a further refinement on the degree of accomplishing the task LRA.
\begin{definition}\label{d1}
    (Complete LRA) A set of multipartite quantum states \(\mathcal{S}:=\{\ket{\psi_k}\}_{k=1}^N\), admits complete LRA, if every question \(\{\mathcal{Q}_i\}_{i=1}^N\) can be perfectly answered under LOCC.
\end{definition}
\begin{definition}\label{d2}
   (Partial LRA) A set of multipartite quantum states \(\mathcal{S}:=\{\ket{\psi_k}\}_{k=1}^N\), admits partial LRA, if at least one question \(\mathcal{Q}_j\), for some \(1\le j \le N\), can be perfectly answered under LOCC.
\end{definition}
It is immediate that complete LRA implies partial LRA. However, the converse does not hold in general. From now on, by LRA we simply refer to complete LRA of a set of quantum states, otherwise mentioned.

Finally we note the premise of LRA is highly aligned with composite hypothesis testing  \cite{berta2021composite, fujiki2025quantum}, under restricted measurement settings. In particular, for the question \(\mathcal{Q}_i\) the null hypothesis: the state is \(\ket{\psi_i}\), is a simple hypothesis, while the alternative hypothesis: the state is not \(\ket{\psi_i}\), renders a wide range of possibilities for the shared state, hence can not be accounted for a simple hypothesis.\par 
\begin{figure}[htb]
    \centering
    \includegraphics[width = 0.8\linewidth]{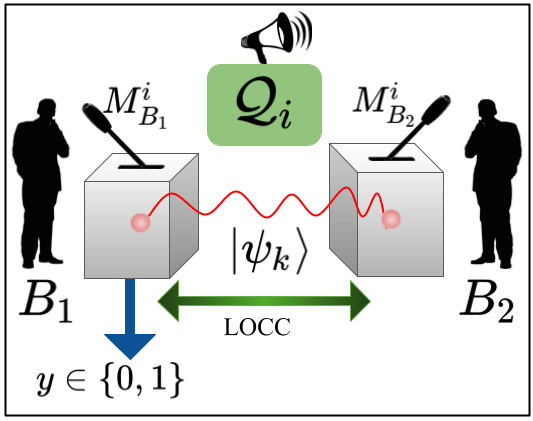}
    \caption{Task of LRA in bipartite settings. A state $\ket{\psi_k}$, randomly chosen from the set of orthogonal quantum states $\mathcal{S}$, is distributed between $B_1$ and $B_2$ and the question $\mathcal{Q}_i$ is asked. Depending upon the question they will execute a LOCC protocol and answer a bit $y\in\{0,1\}$, which stands for "{\tt No}" and "{\tt Yes}" respectively. Since the framework considers unrestricted classical communication between the parties, w.l.g we can assume that the final answer will be given by $B_1$.
    }
    \label{f1}
\end{figure}
To this end, we will move to explore the characteristic features of our proposed task LRA and consequently compare the outcomes from conventional LOCC state discrimination paradigm. The rest of the paper is organized as the following: Section \ref{s3} is dedicated to highlight the main results and comprised of three subsections. In Subsection \ref{ss1}, we compare the strength of LRA with that of the LOCC state discrimination. Thereafter, in Subsection \ref{ss2}, we show that while entangled members are necessary for a set of orthogonal quantum states to exhibit impossibility in the task of LRA, the task itself shows more (conditional) nonlocality with less entanglement phenomenon. Subsection \ref{ss3} mainly characterizes the set of states for which even the partial LRA is impossible. We introduce another weaker variant of LRA, namely the conclusive one in Subsection \ref{ss4}, where we compare its strength with respect to the conclusive LOCC state discrimination and note that the task exhibit more (conditional) nonlocality with less purity. In Section \ref{s4}, we highlight a practical implication of the task LRA in the context of local estimation of the quantum change point problem. Finally, we conclude and discuss promising future directions in Section \ref{s5}.

\par
\section{Main Results}\label{s3}
\subsection{Relationship with LOCC discrimination}\label{ss1}
With the introduction of the task LRA, one may immediately tempted to ask about its strength with respect to the traditional notion of state discrimination. Although intuitive, in the following we will first note that
\begin{theorem}\label{t1}
Perfect LOCC discrimination of a set of quantum states implies their perfect local random authentication, while the converse is not true.
\end{theorem}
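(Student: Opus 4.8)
The plan is to prove the two assertions separately; the implication is essentially immediate from the definitions, and all the real work is in producing a separating example for the converse.

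\emph{Implication.} Suppose $\mathcal S=\{\ket{\psi_i}\}_{i=1}^N$ admits a perfect LOCC discrimination protocol $\Pi$, i.e.\ on input $\ket{\psi_k}$ the receivers jointly output the correct label $k$ with certainty. I would then use $\Pi$ as a subroutine: whatever query $\mathcal Q_i$ the referee broadcasts, the Bobs run $\Pi$, and once $B_1$ has collected the (freely communicated) outcomes and thereby learned $k$, it returns $y=1$ if $k=i$ and $y=0$ otherwise. Since the states are mutually orthogonal, the correct response to $\mathcal Q_i$ is $y=1$ exactly when $k=i$, so this legitimate LOCC strategy authenticates every $\ket{\psi_i}$ with unit probability; hence perfect LRA holds.

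\emph{Converse.} For the failure of the reverse implication I would exhibit a set that is LRA-able but not LOCC-distinguishable, and the three-element subset of the Bell basis, $\mathcal S=\{\ket{\Phi^+},\ket{\Phi^-},\ket{\Psi^+}\}$ in $\mathbb{C}^2\otimes\mathbb{C}^2$, does the job. That no three Bell states can be perfectly discriminated under LOCC is the classic result of \cite{ghosh2001distinguishability}, which I would simply quote. To see that $\mathcal S$ nonetheless admits perfect LRA, I would use the two-qubit correlation signs $\big(\langle\sigma_x\otimes\sigma_x\rangle,\langle\sigma_y\otimes\sigma_y\rangle,\langle\sigma_z\otimes\sigma_z\rangle\big)$, which equal $(+,-,+)$ on $\ket{\Phi^+}$, $(-,+,+)$ on $\ket{\Phi^-}$, $(+,+,-)$ on $\ket{\Psi^+}$ (and $(-,-,-)$ on the absent $\ket{\Psi^-}$). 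Thus, to answer $\mathcal Q_{\Phi^+}$ both receivers measure $\sigma_y$ and return $y=1$ iff their outcomes are anticorrelated; for $\mathcal Q_{\Phi^-}$ they measure $\sigma_x$, for $\mathcal Q_{\Psi^+}$ they measure $\sigma_z$. In each case anticorrelation occurs with certainty on the queried state and never on the other two members of $\mathcal S$. These are local projective measurements with free classical communication, hence LOCC, so each state of $\mathcal S$ is authenticated perfectly while the set itself is LOCC-indistinguishable.

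\emph{Main obstacle.} The delicate point is the \emph{choice} of the separating example, since LOCC-indistinguishability alone does not buy LRA-ability. For the full Bell basis, for instance, authenticating $\ket{\Phi^+}$ against the other three states (which span the whole space) forces the two-outcome measurement $\{\ketbra{\Phi^+}{\Phi^+},\,\mathbbm{1}-\ketbra{\Phi^+}{\Phi^+}\}$, whose first element is entangled, so that POVM is not even separable, let alone LOCC. Deleting one Bell state is precisely what removes this obstruction: it lifts the Pauli-correlation degeneracy so that each surviving state is singled out by a single local Pauli basis. Hence the crux is to identify a set that is indistinguishable under LOCC yet for which every ``one-versus-the-rest'' binary test remains implementable locally, and then to carry out the short verification above; I do not anticipate further difficulty once this example is in hand.
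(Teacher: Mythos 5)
Your proposal is correct and follows essentially the same route as the paper: the forward implication by running the LOCC discrimination protocol as a subroutine, and the converse via the three Bell states $\{\ket{\phi^+},\ket{\phi^-},\ket{\psi^+}\}$ authenticated with exactly the same $\sigma_y,\sigma_x,\sigma_z$ correlation tests (the paper likewise cites \cite{ghosh2001distinguishability, walgate2002nonlocality} for their LOCC-indistinguishability). Your sign table and the correlated/anticorrelated decision rule check out, so no gap remains.
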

\begin{proof}
The fact that perfect LOCC discrimination implies perfect LRA is obvious. Suppose the set $\mathcal{S}:=\{\ket{\psi_i}\}_{i=1}^N$ of orthogonal quantum states can be discriminated under LOCC. Therefore, by identifying the unknown state (say, $\ket{\psi_k}$) given from the set $\mathcal{S}$, the separated parties can answer any of the $N$-possible authentication questions. In particular, their answer will be $0$ for every $\mathcal{Q}_{j\neq k}$ and $1$ for $\mathcal{Q}_k$.

To show that the converse does not hold true, we will consider the following example: 
\begin{eqnarray}\label{e1}
\nonumber\ket{E_1}:= \ket{\phi^{+}}_{B_1B_2}\\
\nonumber\ket{E_2}:= \ket{\phi^-}_{B_1B_2}\\
\ket{E_3}:= \ket{\psi^+}_{B_1B_2}
\end{eqnarray}
It follows from \cite{ghosh2001distinguishability, walgate2002nonlocality} that these three states can not be distinguished perfectly under LOCC. However, they can be authenticated locally by adopting the strategy: $\mathcal{Q}_1\to\sigma_y\otimes\sigma_y,~\mathcal{Q}_2\to\sigma_x\otimes\sigma_x$ and $\mathcal{Q}_3\to\sigma_z\otimes\sigma_z$, where $\sigma_i\otimes\sigma_i$ implies the local measurement of the Pauli observable $\sigma_i$ for both the parties. In particular, the spatially separated parties will answer $y=0$ whenever their outcomes are correlated and $y=1$ otherwise. 
\end{proof}

\subsection{Necessity of entanglement}\label{ss2}
All the nonlocal aspects in terms of local state discrimination (known till now) also hold for pure product states \cite{bennett1999quantum, rinaldis2004distinguishability, niset2006multipartite, feng2009characterizing, zhang2014nonlocality, xu2017local, halder2018several, halder2019strong, rout2021multiparty}, often termed as "\textit{Nonlocality Without Entanglement}". Taken in isolation, one can even locally activate such nonlocal features starting from  locally distinguishable product preparations \cite{li2022genuine, ghosh2022activating}. Therefore, such nonlocal features can not certify the presence of entanglement in the preparation device. However, in the following, we will first show that LRA can be accomplished perfectly for any product state, and as a consequence, we will conclude that any \textit{conditional nonlocal} set must contain entanglement.
\begin{theorem}\label{t2}
Any pure product state chosen from a set of orthogonal quantum states can be authenticated locally.
\end{theorem}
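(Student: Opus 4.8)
The plan is to exhibit, for the product member of the ensemble, an explicit one‑way LOCC protocol that answers the matching authentication query correctly with certainty no matter which state of $\mathcal{S}$ Alice actually prepared. So suppose the state $\ket{\psi_k}\in\mathcal{S}$ we wish to authenticate is fully product across the $n$ receivers, $\ket{\psi_k}=\ket{\phi_1}_{B_1}\otimes\cdots\otimes\ket{\phi_n}_{B_n}$, and that the referee has broadcast the query $\mathcal{Q}_k$. Each receiver $B_a$ performs the local two‑outcome projective measurement $\{\ketbra{\phi_a}{\phi_a},\,I-\ketbra{\phi_a}{\phi_a}\}$, recording a bit $b_a=1$ for the first (``pass'') outcome and $b_a=0$ for the second; the receivers $B_2,\dots,B_n$ then forward their bits to $B_1$, who announces $y=1$ if $b_a=1$ for all $a$ and $y=0$ otherwise. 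This uses only local measurements and a single round of classical communication, so it is manifestly a LOCC protocol.

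Next I would check perfect correctness by a one‑line computation resting on the only structural input that is needed, namely that the rank‑one projector onto a product vector factorizes, $\bigotimes_{a=1}^{n}\ketbra{\phi_a}{\phi_a}=\ketbra{\psi_k}{\psi_k}$. For an arbitrary prepared state $\ket{\psi_j}\in\mathcal{S}$,
\[
\Pr\!\big[b_1=\cdots=b_n=1 \,\big|\, \ket{\psi_j}\big]
=\bra{\psi_j}\Big(\textstyle\bigotimes_{a=1}^{n}\ketbra{\phi_a}{\phi_a}\Big)\ket{\psi_j}
=\big|\langle\psi_k|\psi_j\rangle\big|^{2}.
\]
If $j=k$ this equals $1$, so $B_1$ outputs $y=1$ with certainty; if $j\neq k$ then orthogonality of $\mathcal{S}$ forces it to be $0$, so at least one $b_a=0$ and $B_1$ outputs $y=0$ with certainty. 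Hence the query $\mathcal{Q}_k$ is answered correctly against all $N$ possible preparations, which is exactly the statement that $\ket{\psi_k}$ can be authenticated locally.

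I do not expect a genuine obstacle here: the whole argument is carried by the factorization of the projector onto a product state, which turns the joint ``is the state $\ket{\psi_k}$?'' test into $n$ independent local tests combined by a logical AND, while orthogonality of $\mathcal{S}$ rules out any rival $\ket{\psi_j}$ passing all of them. The only points I would be careful to spell out are that a single fixed protocol must succeed simultaneously against every one of the $N$ preparations (handled uniformly by the displayed identity) and that the construction needs neither shared entanglement nor adaptive communication and is insensitive to $n$. Finally, combining this with the fact that product states do occur inside locally indistinguishable ensembles (cf.\ Theorem~\ref{t1}), one concludes that an ensemble whose LRA is impossible — a \textit{conditional nonlocal} set — cannot consist solely of product states, and must therefore contain at least one entangled state.
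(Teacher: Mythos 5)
Your proposal is correct and coincides with the paper's own argument: each receiver performs the local two-outcome test onto its factor of the product state, and the factorization of the rank-one projector together with orthogonality of the ensemble guarantees the AND of the outcomes answers $\mathcal{Q}_k$ perfectly. No substantive difference from the paper's proof.
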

\begin{proof}
Let us consider a set of orthogonal quantum states, possibly mixed, $\mathcal{S}:=\{\rho_i\}_{i=1}^n$ shared between $N$-spatially separated parties. Moreover, the set $\mathcal{S}$ contains atleast one pure product state $\ket{\psi_j}=\bigotimes_{k=1}^N\ket{\xi_k^{(j)}}$. Note that, since all the $\rho_i\in\mathcal{S}$ are mutually orthogonal, $\rho_j=|\psi_j\rangle\langle \psi_j|\perp\rho_i,~\forall i\in\{1,2,\cdots,n\}\setminus\{j\}$.

Now, for the question $\mathcal{Q}_j$, i.e., whether the shared state is $\ket{\psi_j}$ or not, each of the parties can simply perform the measurements $M_{B_k}^{(j)}:=\{\ketbra{\xi_k^{(j)}}{\xi_k^{(j)}},\mathbb{I}-\ketbra{\xi_k^{(j)}}{\xi_k^{(j)}}\},~\forall k\in\{1,2,\cdots,N\}$ on their local constituents.
Accordingly, the projector $\mathcal{P}_j:=\bigotimes_{k=1}^{N}\ketbra{\xi_k^{(j)}}{\xi_k^{(j)}}$ of the measurement $\bigotimes_{k=1}^{N} M_{B_k}^{(j)}$ clicks only when they share the state $\ket{\psi_j}$. Conversely, no other effects click for this state. Hence, they will answer $1$ when $\mathcal{P}_j$ clicks and $0$ otherwise. 
\end{proof}
From Theorem \ref{t2}, one can easily derive the following corollary. 
\begin{corollary}\label{c1}
Any set of orthogonal pure product states admits complete local authentication.
\end{corollary}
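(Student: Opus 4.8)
The plan is to derive the statement as an immediate consequence of Theorem~\ref{t2}, applied separately to each element of the ensemble. Recall that a set $\mathcal{S}$ admits \emph{complete} LRA precisely when, for every index $j$, the spatially separated parties possess a LOCC strategy that answers the question $\mathcal{Q}_j$ correctly on every state drawn from $\mathcal{S}$. When every $\ket{\psi_i}\in\mathcal{S}$ is a product state, each individual state qualifies as ``a product state chosen from a set of orthogonal quantum states'', so Theorem~\ref{t2} already supplies the required strategy for the corresponding question; running over all $j$ then gives complete LRA.

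Concretely, I would write $\mathcal{S}:=\{\ket{\psi_i}\}_{i=1}^{n}$ with $\ket{\psi_i}=\bigotimes_{k=1}^{N}\ket{\xi_k^{(i)}}$ and, for the question $\mathcal{Q}_j$, instruct party $B_k$ to perform the two-outcome measurement $M_{B_k}^{(j)}=\{\ketbra{\xi_k^{(j)}}{\xi_k^{(j)}},\,\mathbb{I}-\ketbra{\xi_k^{(j)}}{\xi_k^{(j)}}\}$ on its local share. As in the proof of Theorem~\ref{t2}, the product projector $\mathcal{P}_j=\bigotimes_{k=1}^{N}\ketbra{\xi_k^{(j)}}{\xi_k^{(j)}}$ fires if and only if the shared state is $\ket{\psi_j}$: it certainly fires on $\ket{\psi_j}$, and orthogonality of $\ket{\psi_{i}}$ to $\ket{\psi_j}$ forces at least one local factor $\ket{\xi_k^{(i)}}$ to be orthogonal to $\ket{\xi_k^{(j)}}$, so $\mathcal{P}_j$ annihilates every $\ket{\psi_{i}}$ with $i\neq j$. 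Answering $y=1$ when $\mathcal{P}_j$ clicks and $y=0$ otherwise therefore authenticates $\ket{\psi_j}$, and since $j$ was arbitrary the whole set is completely authenticated.

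There is essentially no obstacle here; the one point worth stressing is that the strategy is permitted to depend on the question $\mathcal{Q}_j$, which is exactly the additional freedom built into the LRA framework. This is why the corollary does not clash with the known existence of orthogonal product bases that are \emph{not} LOCC-distinguishable: one never needs a single measurement resolving all states simultaneously, only a separate yes/no test tailored to each question. I would close with a one-line remark making this separation between complete LRA and LOCC discrimination explicit for orthogonal product sets.
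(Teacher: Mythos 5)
Your proof is correct and follows essentially the same route as the paper: the corollary is obtained by applying the measurement strategy of Theorem~\ref{t2} separately to each question $\mathcal{Q}_j$, which is possible because every member of the set is product. The extra remarks (the question-dependence of the strategy, and the non-clash with LOCC-indistinguishable product bases) are accurate but not needed for the argument.
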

After establishing the strict necessity of entanglement to exhibit the impossibility of LRA, one may be tempted to ask whether the complexity associated with LRA task scales with the number of entangled states present in the given ensemble. Without going to a rigorous quantification of the complexity, we will now show that such a monotonic relation does not hold true in general. In particular, we will identify a set $\mathcal{P}$, such that any $\mathcal{S}$ containing $\mathcal{P}$, can not be authenticated locally if the elements of the set $\mathcal{S}\setminus\mathcal{P}$ are product states. However, there exists a set $\mathcal{S}^\prime\supset\mathcal{P}$, which, interestingly,  allows complete LRA, if a particular entangled state is present in $\mathcal{S}^\prime\setminus\mathcal{P}$. Mimicking the phrase "\textit{more nonlocality with less entanglement}", used in \cite{horodecki2003local}, we put the following proposition.
\begin{proposition}\label{p1}
The impossibility of LRA exhibits more (conditional) nonlocality with less entanglement phenomenon.
\end{proposition}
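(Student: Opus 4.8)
The plan is to fix a two‑element ``entangled core'' $\mathcal{P}$ on $\mathbb{C}^2\otimes\mathbb{C}^2$ (two parties $B_1,B_2$) and compare two orthogonal sets that agree except in one member, which is a product state in one case and a maximally entangled state in the other. I would take $\mathcal{P}:=\{\,\ket{\psi^+}_{B_1B_2},\ \ket{\psi^-}_{B_1B_2}\,\}$, so that $\mathrm{span}(\mathcal{P})=\mathrm{span}\{\ket{01},\ket{10}\}$ and the orthogonal complement $\mathcal{P}^{\perp}=\mathrm{span}\{\ket{00},\ket{11}\}$ contains, up to scalars, only the two product vectors $\ket{00}$ and $\ket{11}$, every other unit vector in $\mathcal{P}^{\perp}$ being maximally entangled. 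The two sets to compare are the product completion $\mathcal{S}_1:=\mathcal{P}\cup\{\ket{00}\}$ and the entangled completion $\mathcal{S}_2:=\mathcal{P}\cup\{\ket{\phi^-}_{B_1B_2}\}$, both containing $\mathcal{P}$.

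I would first show that $\mathcal{S}_2$ admits complete LRA. The set $\mathcal{S}_2$ is a triple of Bell states and differs from the set $\{\ket{E_1},\ket{E_2},\ket{E_3}\}$ of Eq.~(\ref{e1}) only by a local unitary (e.g.\ $\mathbb{I}\otimes\sigma_y$); since the parties can freely reverse a local unitary, local authenticability is preserved and $\mathcal{S}_2$ admits complete LRA by the argument of Theorem~\ref{t1}. Concretely, one answers ``is it $\ket{\psi^+}$?'' by measuring $\sigma_x\otimes\sigma_x$ with $y=1$ iff the outcomes are correlated, ``is it $\ket{\psi^-}$?'' by $\sigma_y\otimes\sigma_y$ with $y=1$ iff anti‑correlated, and ``is it $\ket{\phi^-}$?'' by $\sigma_z\otimes\sigma_z$ with $y=1$ iff correlated; a routine check of the nine state–query combinations gives deterministic, correct answers, so $\mathcal{S}_2$ is conditionally local.

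The main step is to show that $\mathcal{S}_1$ does \emph{not} admit complete LRA. Its product member $\ket{00}$ is authenticable by Theorem~\ref{t2}, so it suffices to rule out authentication of $\ket{\psi^+}$. Such an authentication would be an LOCC two‑outcome measurement $\{E,\mathbb{I}-E\}$ with $E\ket{\psi^+}=\ket{\psi^+}$ and $E\ket{\psi^-}=E\ket{00}=0$. Writing $E$ in the orthonormal basis $\{\ket{\psi^+},\ket{\psi^-},\ket{00},\ket{11}\}$ and using hermiticity together with $0\le E\le\mathbb{I}$, these conditions pin $E$ down to the one‑parameter family $E=\ketbra{\psi^+}{\psi^+}+c\,\ketbra{11}{11}$, $c\in[0,1]$. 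I would then invoke the standard fact that every POVM element of an LOCC (indeed of any separable) measurement is a separable operator, hence positive under partial transposition. But the $\{\ket{00},\ket{11}\}$ block of $E^{T_{B_2}}$ equals $\left(\begin{smallmatrix}0 & 1/2\\ 1/2 & c\end{smallmatrix}\right)$, with determinant $-1/4<0$ for every $c\in[0,1]$; hence $E^{T_{B_2}}\not\ge 0$ and no admissible $E$ can be implemented by LOCC. So $\ket{\psi^+}$ is not locally authenticable and $\mathcal{S}_1$ is conditionally nonlocal. The same computation disposes of $\mathcal{P}\cup\{\ket{11}\}$ and of the full basis $\mathcal{P}\cup\{\ket{00},\ket{11}\}$ (where the free parameter disappears and $E=\ketbra{\psi^+}{\psi^+}$ is forced), so every orthogonal product completion of $\mathcal{P}$ inside $\mathbb{C}^2\otimes\mathbb{C}^2$ fails LRA.

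This yields the claim: $\mathcal{S}_1$ differs from $\mathcal{S}_2$ only by replacing the maximally entangled $\ket{\phi^-}$ with the product state $\ket{00}$, so $\mathcal{S}_1$ carries strictly less entanglement, yet $\mathcal{S}_1$ is conditionally nonlocal while $\mathcal{S}_2$ is not --- precisely ``more (conditional) nonlocality with less entanglement.'' The step I expect to be delicate is the rigidity argument in the main step, i.e.\ pinning the authenticating POVM element down to the family $\ketbra{\psi^+}{\psi^+}+c\,\ketbra{11}{11}$; once that is in place the partial‑transposition (non‑separability) obstruction is immediate and the remaining verifications are routine.
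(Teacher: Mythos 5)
Your proposal is correct, and it establishes the proposition with essentially the paper's own example: your $\mathcal{P}=\{\ket{\psi^+},\ket{\psi^-}\}$ with completions $\ket{00}$ versus $\ket{\phi^-}$ is just a local-unitary relabelling of the paper's $\{\ket{\phi^+},\ket{\phi^-}\}$ with $\ket{01}$ versus $\ket{\psi^+}$, and your positive half (three Bell states authenticated via $\sigma_x\otimes\sigma_x$, $\sigma_y\otimes\sigma_y$, $\sigma_z\otimes\sigma_z$) coincides with the strategy of Theorem~\ref{t1}. Where you genuinely diverge is the impossibility half. The paper argues round-by-round: any first local POVM element $M^{(1)}_m$ of $B_1$ must satisfy the orthogonality-preservation conditions of Eq.~(\ref{e2}), which force $[M^{(1)}_m]^\dagger M^{(1)}_m\propto\mathbb{I}$, so no party can start a nontrivial protocol (a Walgate--Hardy-style argument). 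You instead work at the level of the coarse-grained global ``yes'' effect: the perfect-authentication constraints rigidly pin it to the family $E=\ketbra{\psi^+}{\psi^+}+c\,\ketbra{11}{11}$, and since every effect of an LOCC (indeed of any separable) measurement is a separable operator, the negative partial transpose of every member of this family gives the contradiction. Your route buys a strictly stronger conclusion (impossibility even under separable measurements) and sidesteps the implicit induction over rounds behind the paper's ``trivial first measurement'' argument; the paper's route is more elementary, needing only a $2\times2$ positivity computation and no appeal to LOCC $\subseteq$ SEP or the PPT criterion, and it reuses the orthogonality-preserving-measurement toolbox invoked elsewhere in the paper. One harmless inaccuracy in your setup: not every unit vector of $\mathrm{span}\{\ket{00},\ket{11}\}$ other than $\ket{00},\ket{11}$ is \emph{maximally} entangled (only balanced superpositions are); all your argument actually needs, and what is true, is that $\ket{00}$ and $\ket{11}$ are the only product vectors in that subspace.
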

\begin{proof}
    Here we will consider a minimalistic  scenario:  $\mathcal{S}$ is a collection of \textit{three} two-qubit states, along with $\mathcal{P}$ which contains exactly two maximally entangled states $\{\ket{\phi^+},\ket{\phi^-}\}$, shared between $B_1$ and $B_2$. Now, the only possible product state which is orthogonal to both of them is either $\ket{01}$ or, $\ket{10}$. Without loss of any generality, we can assume $\mathcal{S}:= \mathcal{P}\cup\{\ket{01}_{B_1B_2}\}=\{\ket{\phi^+}_{B_1B_2},\ket{\phi^-}_{B_1B_2},\ket{01}_{B_1B_2}\}$, since there always exist a local unitary, which interchanges between $\ket{01}$ and $\ket{10}$, keeping $\ket{\phi^+}$ and $\ket{\phi^-}$ invariant.

    According to Theorem \ref{t2}, $B_1$ and $B_2$ can authenticate the state $\ket{01}$ locally, i.e., they can answer the question $\mathcal{Q}_3$ perfectly, just by performing $\sigma_z\otimes\sigma_z$ measurement on their local constituents. Now, suppose there exists a locally implementable POVM, by performing which they can answer the question $\mathcal{Q}_1$ perfectly. With no loss of generality, we assume that $B_1$ starts the protocol by performing a POVM $\{M_m^{(1)}\}_m$, where $M_m^{(1)}$ is a $2\times2$ positive semi-definite matrix such that $\sum_m M_m^{(1)}=\mathbb{I}$. If $B_1$ obtains the result $m$ then the protocol can be carried out further, only if the post-measurement state of $\ket{\phi^+}$ remains orthogonal to the post-measurement states of $\ket{\phi^-}$ and $\ket{01}$. Specifically, we write 
    \begin{eqnarray}\label{e2}
     \nonumber\bra{\phi^-}(M_m^{(1)})_{B_1}\otimes\mathbb{I}_{B_2}\ket{\phi^+}=0,\\
        \text{ and }\bra{01}(M_m^{(1)})_{B_1}\otimes\mathbb{I}_{B_2}\ket{\phi^+}=0   
    \end{eqnarray}
    Now considering
    \begin{align}
    \nonumber
	M_m^{(1)}:= 
	\begin{pmatrix}
		\alpha & \beta \\
		\beta^* & \gamma 
	\end{pmatrix},
\end{align}
and using Eq. (\ref{e2}), we obtain $\alpha=\gamma$ and $\beta=0$, i.e., $M_m^{(1)}=\alpha \mathbb{I}$. This rules out the possibilities of any nontrivial POVM on $B_1$'s side, by which they can authenticate the state $\ket{\phi^+}_{B_1B_2}$. Therefore, the set $\mathcal{S}$ does not allow complete local authentication.

On the other hand, if we consider $\mathcal{S}^\prime:=\mathcal{P}\cup\{\ket{\psi^+}_{B_1B_2}\}$, then $B_1$ and $B_2$ can perfectly authenticate the set $\mathcal{S}^\prime$ locally (as follows from the example given in the proof of Theorem \ref{t1}).
\end{proof}

   Note that the conditions in Eq. (\ref{e2}) are weaker than that of \textit{orthogonality preserving local measurement}, introduced in \cite{walgate2002nonlocality}, which is an important tool to establish the \textit{irreducibility} \cite{halder2019strong} for a given set of orthogonal states.
   
   It is also important to mention that the phenomenon of more (conditional) nonlocality with less entanglement for the task LRA is more surprising than the similar results for LOCC discrimination \cite{horodecki2003local}. This is because while none of the product states possess \textit{conditional nonlocaltiy}, there are several examples of \textit{nonlocal} product states in the question of LOCC discrimination \cite{halder2018several}. 

Although the set $\mathcal{S}:=\mathcal{P}\cup\{\ket{01}\}$ in Proposition \ref{p1} does not allow the perfect accomplishment of LRA, the question $\mathcal{Q}_3$ can always be answered correctly. This gives a concrete example of partial LRA for a given set of orthogonal quantum states. It is obvious from Theorem \ref{t2} that if a set of quantum states does not allow even partial LRA, then the set necessarily contains no product state. In the following, we will further emphasize on the structure of such strictly (conditional) nonlocal sets of multipartite quantum states.
\subsection{Impossibility of Local Random Authentication}\label{ss3}
In the multipartite settings, quantum entanglement exhibits rich topological structure, giving rise to infinitely many inequivalent classes under LOCC transformations \cite{sauerwein2018transformations}. Depending on the entanglement structure, authentication tasks can be formulated under various causal constraints—for instance, allowing arbitrary subsets of parties to perform joint decoding measurements, while others remain spatially separated. In contrast, the setting of LRA assumes complete spatial separation: no two receivers may share a laboratory or perform joint operations. Under this strict constraint, we establish a no-go theorem ruling out the possibility of even partial LRA, as formalized in the following.

\begin{lemma}\label{l1}
    If a set of orthogonal states can not be discriminated conclusively under LOCC, even with an arbitrarily small probability, then the set does not even allow partial LRA. 
\end{lemma}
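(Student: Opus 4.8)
The plan is to prove the contrapositive: if the set $\mathcal{S}:=\{\ket{\psi_i}\}_{i=1}^{N}$ admits even \emph{partial} LRA, then $\mathcal{S}$ can be conclusively discriminated under LOCC with some strictly positive probability. By definition of partial LRA there is at least one index $i$ for which the spatially separated parties possess an LOCC protocol answering the query $\mathcal{Q}_i$ perfectly. I would work with the binary coarse-graining of the outcomes of this protocol into the events ``$y=1$'' (answer \texttt{Yes}) and ``$y=0$'' (answer \texttt{No}); since relabelling and grouping of classical outcomes does not take us outside LOCC, this is a genuine two-outcome LOCC POVM $\{\Pi_{\mathrm{Y}},\Pi_{\mathrm{N}}\}$ with $\Pi_{\mathrm{Y}}+\Pi_{\mathrm{N}}=\mathbb{I}$.

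The key step is to extract the orthogonality relations forced by perfectness. Because the protocol outputs $y=1$ with certainty on input $\ket{\psi_i}$ and $y=0$ with certainty on every $\ket{\psi_{j}}$ with $j\neq i$, we have $\bra{\psi_i}\Pi_{\mathrm{Y}}\ket{\psi_i}=1$ and $\bra{\psi_{j}}\Pi_{\mathrm{Y}}\ket{\psi_{j}}=0$ for all $j\neq i$. Hence the event ``$y=1$'' is \emph{error-free} evidence that the shared state is $\ket{\psi_i}$: it can never occur for any other element of $\mathcal{S}$, and it occurs with probability one when the state really is $\ket{\psi_i}$. Relabelling the same LOCC protocol so that ``$y=1$'' reports ``the state is $\ket{\psi_i}$'' and ``$y=0$'' reports ``abstain'' therefore produces an LOCC discrimination scheme that never errs and, taking the uniform prior over $\mathcal{S}$, succeeds with probability at least $1/N>0$. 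This contradicts the hypothesis that $\mathcal{S}$ cannot be conclusively distinguished under LOCC even with an arbitrarily small probability, and the lemma follows.

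If one insists on the stronger reading of conclusive discrimination in which \emph{each} state of the set must be identifiable with positive probability, the same argument survives after one refinement: writing $\Pi_{\mathrm{Y}}=\sum_{a\in Y}E_a$ as a sum of the fine-grained POVM elements of the protocol, positivity $0\le E_a\le\Pi_{\mathrm{Y}}$ propagates $\bra{\psi_{j}}E_a\ket{\psi_{j}}=0$ to all $j\neq i$ and all $a\in Y$, while $\sum_{a\in Y}\bra{\psi_i}E_a\ket{\psi_i}=1$ forces some particular $a$ to satisfy $\bra{\psi_i}E_a\ket{\psi_i}>0$; that single outcome already constitutes a positive-probability, error-free identification of one element of $\mathcal{S}$, which is exactly what the ``arbitrarily small probability'' clause is designed to exclude.

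I expect the only real obstacle to be conventional rather than mathematical: one must make sure that ``answering a binary authentication query perfectly'' is translated faithfully into the measurement-theoretic language of conclusive LOCC discrimination --- that the coarse-grained \texttt{Yes} effect is a bona fide LOCC POVM element obeying the orthogonality relations above, and that conclusive LOCC identification of even one element of the set with nonzero probability is precisely what the hypothesis rules out. Once these bookkeeping points are pinned down, no computation remains.
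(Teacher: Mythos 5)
Your proposal is correct and follows essentially the same route as the paper: both argue the contrapositive, take the two-outcome LOCC measurement that perfectly answers the single authenticable query $\mathcal{Q}_i$, and reinterpret its ``Yes'' outcome as an error-free conclusive identification of $\ket{\psi_i}$, succeeding with probability at least $1/N$ under the uniform prior. Your extra care about coarse-graining and the fine-grained reading of conclusiveness is sound but not needed beyond what the paper records.
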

\begin{proof}
     We will prove the result conversely. Consider a set of orthogonal quantum states $\mathcal{S}:=\{\ket{\psi_i}\}_{i=1}^N$ which allows the accomplishment of partial LRA. This means, there exists at least one $\ket{\psi_k}\in\mathcal{S}$ which can be authenticated locally, i.e., the question $\mathcal{Q}_k$ can be answered correctly, using a two-outcome LOCC implementable quantum measurement $\mathcal{M}_{k}:=\{M_k^{(0)},M_k^{(1)}\}$.

     Now, consider a situation of LOCC state discrimination, where an arbitrary state from $\mathcal{S}$ is distributed among spatially separated parties. It is obvious from above that they can at least perform the LOCC implementable measurement $\mathcal{M}_{k}$ and can conclusively identify the unknown state. In particular, the state will be $\ket{\psi_k}$ whenever $M_k^{(1)}$ clicks, otherwise inconclusive. This implies, the set $\mathcal{S}$ is conclusively LOCC distinguishable with a probability no less than $\frac{1}{N}$.
\end{proof}
\noindent
 Numerically the probability exactly matches with that of the random guessing, however, unlike the random guessing, the conclusive state discrimination never makes a wrong identification \cite{chefles1998unambiguous}.
 
Using the above lemma we will now conclude about the possibilities of LRA for a complete orthonormal basis of multipartite quantum states. 

\begin{theorem}\label{t3}
    Consider a complete set of orthogonal basis states $\mathcal{S}_{n}:=\{\ket{\psi_i}_{B_1,B_2,\cdots,B_n}\}\in\bigotimes_{k=1}^n\mathbb{C}^{d_k}$. If none of these states is fully product, then the set does not even allow partial LRA.
\end{theorem}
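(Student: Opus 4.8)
The plan is to derive Theorem~\ref{t3} from Lemma~\ref{l1}. By that lemma it is enough to show that a complete orthonormal basis $\mathcal{S}_{n}$ containing no fully product state cannot be conclusively discriminated under LOCC with \emph{any} nonzero probability. (The case $n=1$ is vacuous, since then every state counts as ``fully product''; so assume $n\ge 2$ and write $D=\prod_{k=1}^{n}d_k$.)

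First I would extract the algebraic skeleton of a conclusive LOCC protocol. A single run that ends by announcing the label $k$ is produced by a finite sequence of local operations, each acting on one tensor factor $\mathbb{C}^{d_j}$ (with identities elsewhere) and adaptively chosen from the classical messages exchanged so far. Composing the corresponding Kraus operators along such a branch --- and refining any genuine instrument into its individual Kraus operators --- the (subnormalised) operation realised on that branch has Kraus operators of the form $K=\bigotimes_{j=1}^{n}B_j$, with $B_j$ acting on $\mathbb{C}^{d_j}$. The requirement that the protocol never errs forces, for every branch announcing $k$ and every such $K$, that $K\ket{\psi_j}=0$ for all $j\ne k$; and the assumption that it returns a correct conclusive answer with nonzero probability yields some label $k$, some branch and some Kraus operator $K$ with $K\ket{\psi_k}\ne 0$, hence $K\ne 0$.

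The heart of the argument is then a rank count. Since $\ket{\psi_k}$ belongs to an orthonormal basis, the $D-1$ vectors $\{\ket{\psi_j}\}_{j\ne k}$ are linearly independent and all lie in $\ker K$, so $\operatorname{rank}K\le 1$; as $K\ne0$ this gives $\operatorname{rank}K=1$ and $\dim\ker K=D-1$. Using $\operatorname{rank}\!\big(\bigotimes_{j}B_j\big)=\prod_{j}\operatorname{rank}B_j$, each $B_j$ must have rank one, say $B_j=\ket{a_j}\!\bra{b_j}$; then $K=\big(\bigotimes_j\ket{a_j}\big)\big(\bigotimes_j\bra{b_j}\big)$, so $\ker K$ is exactly the hyperplane orthogonal to $\bigotimes_j\ket{b_j}$. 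But $\ker K$ also equals $\operatorname{span}\{\ket{\psi_j}\}_{j\ne k}=\ket{\psi_k}^{\perp}$, and two coinciding hyperplanes have the same orthocomplement line, so $\ket{\psi_k}\propto\bigotimes_j\ket{b_j}$ is fully product --- contradicting the hypothesis on $\mathcal{S}_n$. Hence no conclusive branch can exist, $\mathcal{S}_n$ is not conclusively LOCC-distinguishable with any probability, and Lemma~\ref{l1} completes the proof.

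The step I expect to be delicate is the first one: making precise that an arbitrary adaptive LOCC branch (two-way communication, unbounded rounds, general local instruments) collapses to a single product operator $K=\bigotimes_j B_j$, and that the zero-error condition is faithfully captured by $K\ket{\psi_j}=0$ for the non-announced labels. Once that bookkeeping is in place the remaining linear algebra is short; note that it uses completeness of the basis crucially --- the ``$D-1$ independent vectors in $\ker K$'' count fails for small sub-families, consistently with the fact that, e.g., two orthogonal Bell states are perfectly LOCC-distinguishable.
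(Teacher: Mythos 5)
Your proof is correct, and it takes a genuinely different route from the paper's. The paper also reduces Theorem~\ref{t3} to the claim that conclusive LOCC identification of any non-fully-product member of a complete orthogonal basis is impossible, but it establishes that claim by an entanglement-monotonicity argument: auxiliary parties $A_1,\dots,A_n$ share maximally entangled states with $B_1,\dots,B_n$, the $\mathbf{U}\otimes\mathbf{U}^*$ invariance of the global maximally entangled state is used to re-express it in the basis $\{\ket{\psi_i}\}$, and a conclusive identification of an entangled $\ket{\psi_m}$ would leave the $A$-side holding $\ket{\psi_m^*}$, creating entanglement across a cut where the initial state was product --- a generalization of the Horodecki \emph{et al.} ``more nonlocality with less entanglement'' construction. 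You instead argue directly on the operator structure of an LOCC branch: a conclusive branch announcing $k$ is implemented by a product Kraus operator $K=\bigotimes_j B_j$ annihilating the $D-1$ other basis vectors, so $\operatorname{rank}K=1$, each $B_j=\ketbra{a_j}{b_j}$, and matching $\ker K=\ket{\psi_k}^{\perp}$ with the hyperplane $(\bigotimes_j\ket{b_j})^{\perp}$ forces $\ket{\psi_k}\propto\bigotimes_j\ket{b_j}$ to be fully product. Your bookkeeping of the delicate step (collapsing an adaptive branch to one product operator and refining instruments into individual Kraus operators) is the standard and correct way to handle it, and your rank/kernel count correctly exploits completeness of the basis. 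What your route buys: it is self-contained linear algebra, and it applies verbatim to the strictly larger class of \emph{separable} operations, so it proves a formally stronger impossibility --- directly relevant to the extension to separable and PPT measurements raised as an open direction in the paper's discussion. What the paper's route buys: it stays within the entanglement-theoretic toolkit used elsewhere in the paper (cf.\ Proposition~\ref{p2}) and avoids formalizing the branch structure of LOCC protocols altogether.
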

\begin{proof}
     Consider a situation with
    $2n$ numbers of parties $\{A_1,\cdots,A_n,B_1,\cdots,B_n\}$, where each pair of parties $A_k$ and $B_k$ share a maximally entangled state $\ket{\phi^+_k}=\frac{1}{\sqrt{d_k}}\sum_{l=0}^{d_k-1}\ket{l}_{A_k}\ket{l}_{B_k},$ $\forall k\in\{1,2,\cdots,n\}$. 
    Evidently, the joint state will take the form,
    \begin{equation}\label{e3}
        \ket{\Phi^+}_{A_1B_1\cdots A_nB_n}=\frac{1}{\sqrt{d_1d_2\cdots d_n}}\bigotimes_{k=1}^n(\sum_{l=0}^{d_k-1}\ket{l}_{A_k}\ket{l}_{B_k}).   
    \end{equation}
It is easy to show that the state in Eq.(\ref{e3}) can be effectively treated as a maximally entangled state 
        \begin{equation*}
        \frac{1}{\sqrt{d_1d_2\cdots d_n}}\sum_{l=0}^{d_1d_2\cdots d_n-1}\ket{l}_{A_1A_2\cdots A_n}\ket{l}_{B_1B_2\cdots B_n}
        \end{equation*}
        for $(\bigotimes_{k=1}^{n}\mathbb{C}^{d_k})^{\otimes 2}$ in the $\mathcal{A}:=\{A_1A_2\cdots A_n\}$ vs. $\mathcal{B}:=\{B_1B_2\cdots B_n\}$ bipartition. Finally, using the $\mathbf{U}\otimes\mathbf{U}^*$ invariance of $\ket{\Phi^+}_{\mathcal{A}, \mathcal{B}}$, the state can be rewritten as
        \begin{equation}\label{e4}
            \ket{\Phi^+}_{\mathcal{A},\mathcal{B}}=\frac{1}{\sqrt{d_1d_2\cdots d_n}}\sum_{l=0}^{d_1d_2\cdots d_n-1}\ket{\xi_l}_{\mathcal{A}}\ket{\xi_l^*}_{\mathcal{B
            }},
        \end{equation}
      where $\{\ket{\xi_l}\}_{l=0}^{d_1d_2\cdots d_n-1}$ is any arbitrary orthonormal basis in $\bigotimes_{k=1}^{n}\mathbb{C}^{d_k}$ and $^*$ denotes the complex conjugation. Evidently, the orthogonal basis set $\{\ket{\psi}_{B_1B_2\cdots B_n}\}$ can also be identified as $\{\ket{\xi_l}_{\mathcal{B}}\}$.\par
    Therefore, if there is a $\ket{\psi_m}\in\mathcal{S}_n$ which is not fully separable, i.e., entangled in some $\{B_1\cdots B_k\}$ vs. $\{B_{k+1}\cdots B_n\}$ bipartition, then by identifying it conclusively under LOCC they can inform  $\{A_1,\cdots,A_n\}$, who are sharing the state $\ket{\psi_m^*}$, entangled in the $\{A_1\cdots A_k\}$ vs $\{A_{k+1}\cdots A_n\}$ bipartition.   However, the state in Eq. (\ref{e3}), and hence Eq. (\ref{e4}), is separable by construction in any possible bipartition $\{A_1B_1\cdots A_mB_m\}$ vs. $\{A_{m+1}B_{m+1}\cdots A_nB_n\}$. Since entanglement in any bipartition can not be increased under LOCC, $\ket{\psi_m}$ must be fully product. In other words, the conclusive local discrimination of $\mathcal{S}_n$ necessitates the presence of at least one product state.\par
    Finally, with the help of Lemma \ref{l1}, we conclude that none of the elements of the set $\mathcal{S}_n$ can be authenticated locally if it contains no product states. This completes the proof.
\end{proof}
\subsection{Conclusive LRA \textit{vs.} Conclusive LOCC}\label{ss4}
In Lemma \ref{l1}, we have shown that the complexity associated with conclusive state discrimination, under LOCC, is not higher than that of partial LRA. A natural question arises at this point whether the converse holds true or not. In the following, we will answer it in negative by providing an example even in the simplest bipartite scenario.  
\begin{proposition}\label{p2}
    The impossibility of partial LRA does not imply the impossibility of conclusive local discrimination.  
\end{proposition}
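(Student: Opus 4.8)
The plan is to exhibit a concrete bipartite set that is conclusively LOCC-distinguishable (with some nonzero probability) yet fails to admit even partial LRA. By Lemma~\ref{l1}, a set that does not allow partial LRA must itself be unambiguously LOCC-indistinguishable; so the two properties coincide for that set, and I only need to verify that the \emph{reverse} implication fails for \emph{some} other set. Concretely, I would look for a small set $\mathcal{S}=\{\ket{\psi_i}\}_{i=1}^N$ of orthogonal bipartite states such that (i) some $\ket{\psi_k}$ can be unambiguously singled out by a local POVM — giving conclusive LOCC discrimination with probability at least $1/N$ — while (ii) \emph{no} state in $\mathcal{S}$ can be \emph{authenticated}, i.e., for every $k$ there is no two-outcome LOCC measurement whose ``yes'' outcome clicks exactly on $\ket{\psi_k}$ and never on the others. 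The natural candidates live in $\mathbb{C}^2\otimes\mathbb{C}^2$ or $\mathbb{C}^2\otimes\mathbb{C}^3$; a promising choice is an incomplete set such as $\{\ket{\phi^+},\ket{\phi^-},\ket{0}\ket{0}\}$-type families, or a set of three entangled two-qubit states chosen so that unambiguous elimination of one element is possible locally but full authentication of each is obstructed by the rank/commutation argument used in Proposition~\ref{p1}.

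The key steps, in order: first, specify the set explicitly and check orthonormality. Second, establish the positive half — produce an explicit local (even one-way) measurement that, on some outcome, certifies one particular state and is inconclusive otherwise; this shows conclusive LOCC discrimination with nonzero probability. This step should be a short computation: typically one party performs a rank-one projector aligned with a local component that appears in only one of the states. Third, establish the negative half — for each $\ket{\psi_k}$, assume a two-outcome LOCC-implementable authentication measurement exists, take the first nontrivial local POVM element $M_m$ on one party, impose the orthogonality-of-post-measurement-states conditions analogous to Eq.~(\ref{e2}) (the ``yes'' branch for $\ket{\psi_k}$ must stay orthogonal to the ``yes''-branch images of all other states), and derive that $M_m^\dagger M_m \propto \mathbb{I}$, exactly as in the proof of Proposition~\ref{p1}. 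Iterating over which party starts and over rounds of the LOCC protocol shows no nontrivial measurement can ever help, so no state is authenticable, hence not even partial LRA is possible. Fourth, combine: the set satisfies the impossibility of partial LRA but is conclusively locally distinguishable, proving the claim.

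The main obstacle is the negative half: ruling out \emph{all} LOCC authentication protocols, not just one-round ones. The $M_m^\dagger M_m\propto\mathbb{I}$ trick handles the first measurement, but I must argue that after any such trivial (proportional-to-identity) step the remaining problem is unchanged — i.e., the post-measurement ensemble is (up to normalization and a fixed local unitary absorbed into later rounds) the same set, so the obstruction propagates through every round of the LOCC tree. Making this induction clean, while keeping the example minimal, is the delicate point; I expect the cleanest route is to choose $\mathcal{S}$ so that every pair of states is ``locally rigid'' in the sense that any local effect compatible with keeping the ``yes''-branches mutually orthogonal must be proportional to the identity on the relevant support, which collapses the whole LOCC protocol to trivial rounds and reduces to the measured party eventually having to distinguish the states with a single final projective measurement — impossible by the same two-line argument. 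A secondary, more routine obstacle is checking that the positive-half measurement genuinely never errs (unambiguity), which just requires that the certifying local projector is orthogonal to the corresponding local reductions of the other states.
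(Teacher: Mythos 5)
There is a genuine gap, and it sits exactly where you flag your ``main obstacle.'' Your overall skeleton is right (exhibit one explicit set that is conclusively LOCC-distinguishable yet admits no partial LRA), but your plan for the negative half --- impose the post-measurement orthogonality conditions of Eq.~(\ref{e2}) on the first local POVM element, derive $M_m^{\dagger}M_m\propto\mathbb{I}$ for \emph{every} state, and propagate this ``local rigidity'' through all rounds --- is structurally at odds with the positive half you also need. Conclusive LOCC discrimination of some $\ket{\psi_k}$ requires a nontrivial local effect (typically a low-rank projector onto a local component appearing only in $\ket{\psi_k}$) that annihilates the other states; such an effect automatically satisfies the orthogonality conditions $\bra{\psi_j}(M^{\dagger}M)\otimes\mathbb{I}\ket{\psi_k}=0$ for $j\neq k$, so it survives your rigidity test. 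In other words, for the very state whose local signature makes conclusive discrimination possible, the Eq.~(\ref{e2})-type argument provably cannot force $M^{\dagger}M\propto\mathbb{I}$, and hence cannot rule out its authentication. This is why the paper's example $\ket{\psi_4}=\tfrac{1}{\sqrt{3}}(\sqrt{2}\ket{\psi^-}+\ket{22})$ (appended to three Bell states inside $\mathbb{C}^3\otimes\mathbb{C}^3$) is conclusively identifiable via $\ketbra{2}{2}\otimes\ketbra{2}{2}$, and why the paper explicitly notes that the Proposition~\ref{p1}-style POVM argument fails for it. You would hit the same wall with any candidate set, and you have not named a concrete set for which your induction could close.

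The missing idea is a tool that bounds \emph{all} LOCC authentication protocols at once without round-by-round analysis. The paper does this with an entanglement-monotonicity argument: take two copies of each state, form the equal-mixture ensemble $\rho_{B_1B_2C_1C_2}$, and upper-bound its distillable entanglement by the relative entropy of entanglement computed against an explicitly separable comparison state built from the Smolin-type separable mixture $\tfrac14\sum_k\ketbra{E_k}{E_k}^{\otimes 2}$, yielding a bound below $0.39$. If $\ket{\psi_4}$ were locally authenticable, the parties could herald it with probability $\tfrac14$ and thereby distill $\tfrac14\log 3>0.39$ ebits --- a contradiction. (The other three states are handled by reduction to Theorem~\ref{t3}.) Without some such global argument, your proof cannot be completed: the first-round rigidity computation is simply too weak for the state that carries the conclusive signature. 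A secondary issue: two of your floated candidates fail immediately --- any set containing a product state admits partial LRA by Theorem~\ref{t2}, and a state like $\ket{0}\ket{0}$ is not even orthogonal to $\ket{\phi^{\pm}}$.
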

\begin{proof}
    Consider a set $\mathcal{S}=\{\ket{\psi_k}\}_{k=1}^8$ of bipartite entangled states in $\mathbb{C}^3\otimes\mathbb{C}^3$ shared between $B_1$ and $B_2$:
    \begin{eqnarray*}
        \ket{\psi_k}_{B_1B_2}&=&\ket{E_k}_{B_1B_2}, \forall k\in\{1,2,3\}\\
        \ket{\psi_4}_{B_1B_2}&=&\frac{1}{\sqrt{3}}(\sqrt{2}\ket{\psi^-}+\ket{22})_{B_1B_2},\\
        \ket{\psi_5}_{B_1B_2}&=&\frac{1}{\sqrt{2}}(\ket{12}+\ket{21})_{B_1B_2},\\
       \ket{\psi_6}_{B_1B_2}&=&\frac{1}{\sqrt{2}}(\ket{12}-\ket{21})_{B_1B_2},\\
        \ket{\psi_7}_{B_1B_2}&=&\frac{1}{\sqrt{2}}(\ket{02}+\ket{20})_{B_1B_2},\\
        \ket{\psi_8}_{B_1B_2}&=&\frac{1}{\sqrt{2}}(\ket{02}-\ket{20})_{B_1B_2},
    \end{eqnarray*}
    where, $\{\ket{E_k}\}_{k=1}^3$ are three of the two-qubit Bell states, as in Eq. (1).
    
    First note that the set of states can be conclusively discriminated under LOCC with a minimum probability of $\frac{1}{24}$. The protocol is simply to perform the measurement $\mathcal{M}:=\{\ketbra{2}{2},\mathbb{I}_3-\ketbra{2}{2}\}$ by both of the parties and they can conclude the given state to be $\ket{\psi_4}$ whenever the projector $\ketbra{2}{2}_{B_1}\otimes\ketbra{2}{2}_{B_2}$ clicks.\par
    
    Next, we will show that none of the states $\{\ket{\psi_k}\}_{k=1}^8$ can be authenticated by LOCC. Since, all the members of $\mathcal{S}\setminus \{\ket{\psi_4}\}$ are connected via local unitary, it is sufficient to show the impossibility of LOCC authentication of only one $\ket{\psi_i} \in \mathcal{S}\setminus \{\ket{\psi_4}\}$. We chose $\ket{\psi_1}=\ket{\phi^+_{B_1B_2}}$ for this purpose. Moreover, since the set $\mathcal{S}$ is party symmetric, with out loss of generality, one can begin their analysis with Bob1. \par

    To locally authenticate $\ket{\phi^+_{B_1B_2}}$ among $\mathcal{S}$, there must exist a non-trivial POVM on Bob1's side which preserves the orthogonality between $\ket{\phi^+_{B_1B_2}}$ and $\mathcal{S}\setminus \{\ket{\phi^+}_{B_1B_2}\}$. In other words, Bob1 must be able to perform a measurement $\{F_n^{(1)}\}_n$, where $F_n^{(1)}$ is a $3\times3$ positive semi-definite matrix such that $\sum_n F_n^{(1)}=\mathbb{I}$.
    We will take into account the following four equations:
     \begin{eqnarray}\label{e2}
     \bra{\phi^-}(F_n^{(1)})_{B_1}\otimes\mathbb{I}_{B_2}\ket{\phi^+}&=&0\label{e21}\\
     \bra{\psi^+}(F_n^{(1)})_{B_1}\otimes\mathbb{I}_{B_2}\ket{\phi^+}&=&0\label{e22} \\
     (\bra{\psi^-}+\bra{22})(F_n^{(1)})_{B_1}\otimes\mathbb{I}_{B_2}\ket{\phi^+}&=&0\label{e23} \\
     (\bra{12}+\bra{21})(F_n^{(1)})_{B_1}\otimes\mathbb{I}_{B_2}\ket{\phi^+}&=&0\label{e24} \\
     (\bra{02}+\bra{20})(F_n^{(1)})_{B_1}\otimes\mathbb{I}_{B_2}\ket{\phi^+}&=&0\label{e25}
    \end{eqnarray}

    Consider.
    \begin{align*}
        F_n^{(1)}:= 
	\begin{pmatrix}
		r_1 & \alpha & \beta \\
		\alpha^* & r_2 & \gamma \\
            \beta^* & \gamma^* & r_3 
	\end{pmatrix},
\end{align*}
Starting with Eq (\ref{e21}), we determine that $r_1 =r_2$. Subsequently, examining Eqs.(\ref{e22}) and (\ref{e23}), we get $\alpha= 0$. Furthermore, Eqs.(\ref{e24}) and (\ref{e25}) lead to the conclusions that $\beta,\gamma = 0$ respectively. It is now clear from the structure of $F_n^{(1)}$, that the only nontrivial orthogonality preserving measurement is $\ketbra{2}{2}$ \textit{vs}. $(\mathbb{I}-\ketbra{2}{2})$. Upon, performing this measurement, if the effect $\ketbra{2}{2}$ clicks, they can confirm that the given state is not $\ket{\phi^+}$.\par
However, if the other effect clicks, the set $\mathcal{S}$ will be updated accordingly to $\mathcal{\Tilde{S}}$. Notice, the first four states of the updated set $\mathcal{\Tilde{S}}$ will be the four Bell states, while both $\ket{\psi_5}$ and $\ket{\psi_6}$ ($\ket{\psi_7}$ and $\ket{\psi_8}$) will collapse to $\ket{12}_{B_1B_2}$ ($\ket{02}_{B_1B_2}$). On the set $\mathcal{\Tilde{S}}$ of updated states Bob2 can perform another POVM $\{F_n^{(2)}\}_n$ with $F_n^{(2)}\geq0,~\forall n$ and $\sum_n F_n^{(2)}=\mathbb{I}$. Similar to the previous case, the same orthogonality conditions will lead to the following set of equations:
\begin{eqnarray}
\nonumber\bra{\psi_k}\mathbb{I}_{B_1}\otimes(F_n^{(2)})_{B_2}\ket{\phi^+}&=&0,~k\in\{2,3\}\\
\nonumber\bra{\psi^-}\mathbb{I}_{B_1}\otimes(F_n^{(2)})_{B_2}\ket{\phi^+}&=&0\\
     \nonumber\bra{12}\mathbb{I}_{B_1}\otimes(F_n^{(2)})_{B_2}\ket{\phi^+}&=&0\\
     \nonumber\bra{02}\mathbb{I}_{B_1}\otimes(F_n^{(2)})_{B_2}\ket{\phi^+}&=&0
    \end{eqnarray}
    A simple numerical exercise from the above conditions establishes that the only possible measurement in Bob2's side is $\{\ketbra{2}{2},\mathbb{I}-\ketbra{2}{2}\}$. Performing this measurement if $\ketbra{2}{2}$ clicks they can certify that the state is not $\ket{\phi^+}_{B_1B_2}$. However, for the effect $\mathbb{I}-\ketbra{2}{2}$ the updated set of states will be $\{\ket{E_k}_{B_1B_2}\}_{k=1}^4$, i.e., the four Bell states in $\mathbb{C}^2\otimes\mathbb{C}^2$. 
Now, invoking Theorem 3 of the main text, one can easily argue that it is impossible to further authenticate the state $\ket{\phi^+}$.\par
To check the local authenticity of the state $\ket{\psi_4}_{B_1B_2}$, again we proceed by analysing orthogonality preserving local POVM. Consider a general POVM on Bob1's lab, 
\begin{align}
    \nonumber
	\mathcal{E}_n^{(1)}:= 
	\begin{pmatrix}
		a_1 & b & c \\
		b^* & a_2 & d \\
            c^* & d^* & a_3 
	\end{pmatrix}
\end{align}
For $\mathcal{E}_n^{(1)}$ to be a potential measurement at Bob1's side which can authenticate the state $\ket{\psi_4}_{B_{1}B_{2}}$ locally, the following set of equations must be satisfied:
$$ \bra{\psi_k} (\mathcal{E}_n^{(1)})_{B_1}\otimes\mathbb{I}_{B_2} \ket{\psi_4} = 0, ~\forall k \in \{1,2,3\}\cup\{5,6\}.$$

By solving these equations, we get $a_1=a_2$ and $b,c,d=0$. These leave Bob1 with the only choice of measurement $\{\ketbra{2}{2},\mathbb{I}-\ketbra{2}{2}\}$. Now, if the effect $(\mathbb{I}-\ketbra{2}{2})$ clicks the set of states gets updated to that of $\Tilde{S}$. Hence, following the same line of argument as above, one can conclude that it is impossible to authenticate the state $\ket{\psi_4}$. This proves our claim.
\end{proof}

Note that if both the parties perform the computational basis measurement $\{\ketbra{0}{0},\ketbra{1}{1},\ketbra{2}{2}\}$ in their local lab, they can conclusively answer `{\tt No}' for any of the questions \(\{\mathcal{Q}_i\}_{i=1}^8\). For instance, both for \(\mathcal{Q}_1\) and \(\mathcal{Q}_2\) they can surely answer `{\tt No}' except when the projectors \(\ketbra{0}{0}\otimes\ketbra{0}{0}\) or \(\ketbra{1}{1}\otimes\ketbra{1}{1}\) clicks. Similarly, one can argue the same for all the other cases. However, interesting scenario arises for the question \(\mathcal{Q}_4\). The parties can also conclusively answer the question with `{\tt Yes}' when both of them clicks \(\ketbra{2}{2}\) projector. Therefore, although the set $\mathcal{S}$ does not permit even partial LRA; any of the questions \(\{\mathcal{Q}_i\}_{i=1}^8\) can be answered conclusively. Moreover, for the question \(\mathcal{Q}_4\), both possibilities, i.e, `{\tt Yes}', as well as `{\tt No}' can  be answered conclusively. This readily instigates the notion of conclusive LRA for a set of multipartite quantum states which we formally define in the following. 
\begin{definition}\label{d3}
(Conclusive LRA) A set of orthogonal multipartite quantum states is said to admit conclusive LRA, if every questions $\{\mathcal{Q}_k\}_k$ can be conclusively answered with a nonzero probability. 
\end{definition}
Notably, Lemma \ref{l1} and Proposition \ref{p2}, hints that the impossibility of partial LRA is a stronger aspect of nonlocality than that of the conclusive LOCC discrimination. However, in question of conclusive LRA, a relation similar to that of Theorem \ref{t1} can be obtained. Precisely,
\begin{theorem}\label{t4}
    Conclusive LOCC distinguishability of a set of quantum states implies the possibility of conclusive LRA, while the converse does not hold true.
\end{theorem}
\begin{proof}
     For any set of pure orthogonal quantum states the possibility of conclusive LRA can be argued trivially. In fact, every set of orthogonal pure quantum states admit conclusive LRA. This is because, for every quantum state $\ket{\psi_{B_1B_2\cdots B_n}}$, there always exists a pure product state $\ket{\Phi_{B_1B_2\cdots B_n}}:=\bigotimes_{k=1}^n\ket{\phi_{B_k}}$, such that 
 $$\langle\psi_{B_1B_2\cdots B_n}|\Phi_{B_1B_2\cdots B_n}\rangle=0.$$
 Therefore, for any set of orthogonal states $\mathcal{S}\ni\ket{\psi}$, whenever the question $\mathcal{Q}_{\psi}$ asked, the spatially separated parties can perform the measurement $\{\ketbra{\phi_{B_k}}{\phi_{B_k}},\mathbb{I}-\ketbra{\phi_{B_k}}{\phi_{B_k}}\}_{k=1}^n$ in each of their individual lab. If $\ketbra{\phi_{B_k}}{\phi_{B_k}}$ clicks in all the labs, then they can conclusively answer the question $\mathcal{Q}_{\psi}$, i.e., the state is not $\ket{\psi_{B_1B_2\cdots B_n}}$. 

  On the other hand, consider any set of complete orthogonal basis, containing no product states. It is clear from the proof of Theorem \ref{t3} that none of them can be conclusively discriminated under LOCC.
\end{proof}

While from the proof of Theorem \ref{t4}, it seems that the notion of conclusive LRA can be accomplished for any set of orthogonal pure quantum states, in the following we will show that the same notion can not be trivially extended for mixed states. In fact, such an impossibility gives rise to another surprising feature for the conditional nonlocality which is depicted in the following proposition.
\begin{proposition}\label{p3}
    The task of conclusive LRA depicts more (conditional) nonlocality with less purity.
\end{proposition}
    \begin{proof}
    Consider the following class of pair of orthogonal mixed states in \(\mathbb{C}^4\otimes\mathbb{C}^4\):
    \begin{align*}
        \rho_1(\theta)&=\frac{1}{8}\sum_{i=1}^8\ketbra{\psi_i^+}{\psi_i^+}_{B_1B_2},\\\text{and }\rho_2(\theta)&=\frac{1}{8}\sum_{i=1}^8\ketbra{\psi_i^-}{\psi_i^-}_{B_1B_2}
    \end{align*}
    where, 
    \begin{align*}
        \ket{\psi_1^{\pm}}_{B_1B_2}&=\frac 1{\sqrt{2}}(\ket{00}\pm\ket{11})_{B_1B_2}\\\ket{\psi_2^{\pm}}_{B_1B_2}&=\frac 1{\sqrt{2}}(\ket{22}\pm\ket{33})_{B_1B_2}\\\ket{\psi_3^{\pm}}_{B_1B_2}&=\frac 1{\sqrt{2}}(\ket{20}\mp\ket{02})_{B_1B_2}\\\ket{\psi_4^{\pm}}_{B_1B_2}&=\frac 1{\sqrt{2}}(\ket{30}\pm\ket{03})_{B_1B_2}\\\ket{\psi_5^{\pm}}_{B_1B_2}&=\frac 1{\sqrt{2}}(\ket{13}\pm\ket{31})_{B_1B_2}\\\ket{\psi_6^{\pm}}_{B_1B_2}&=\frac 1{\sqrt{2}}(\cos \theta \ket{01} +\sin\theta \ket{23}\mp \ket{12})_{B_1B_2}\\\ket{\psi_7^{\pm}}_{B_1B_2}&=\frac 1{\sqrt{2}}(\cos \theta \ket{10} +\sin\theta \ket{32}\mp \ket{21})_{B_1B_2}\\\ket{\psi_8^{\pm}}_{B_1B_2}&=\frac 1{\sqrt{2}}[(\sin \theta \ket{01} -\cos\theta \ket{23}) \pm\\& (\sin \theta \ket{10} -\cos\theta \ket{32})]_{B_1B_2}
    \end{align*}
    and \(0<\theta<\pi/2\). Now, it is easy to see that \(\rho_1(\theta)\perp\rho_2(\theta)\) for all \(0<\theta<\pi/2\). In the following we will show that for each \(\theta\in(0,\pi/2)\), one can not authenticate the set \(\{\rho_1(\theta),\rho_2(\theta)\}\), even conclusively.

   Since, for each such set, there are only two possible quantum states that can be shared between Bob1 and Bob2, the task conclusive LRA also implies their conclusive LOCC discrimination. We will prove the proposition by contradiction. Suppose, without loss of any generality, the question \(Q_1\), i.e., whether the shared state is \(\rho_1(\theta)\) or not,  can be answered conclusively under LOCC. Since, the set of all LOCC is a strict subset of the set of separable measurements \cite{chitambar2014everything}, we can safely consider a separable measurment of the form 
   \[\{P_k\}_k \text{ with, }P_k=\sum_{i_k}A_{i_k}\otimes B_{i_k}, \forall k~\&~\sum_k P_k=\mathbb{I}_{B_1B_2},\]
   where, both \(A_{i_k}, B_{i_k}\geq 0,~\forall i_k\), which successfully performs the task of conclusive authentication. Further assume when the projector \(P_c=\sum_{k\in k_c}P_k\) clicks, then the spatially separated parties can conclusively answer the question \(\mathcal{Q}_1\) as {\tt "No"} (or, they answer the question \(\mathcal{Q}_2\) as {\tt "Yes"} conclusively). Therefore, \(\forall k\in k_c\text{ and }\forall i_k\), we have
   \begin{align*}
       &\Tr(P_c~\rho_1(\theta))=0\\\Rightarrow &\Tr(P_k~\rho_1(\theta))=\Tr[(A_{i_k}\otimes B_{i_k})~\rho_1(\theta)]=0.
   \end{align*}
   Now, each of the \(A_{i_k}\) and \(B_{i_k}\) being positive semi-definite and hence normal, we can also write
   \[\Tr[(\ketbra{\phi_{i_k}^l}{\phi_{i_k}^l}\otimes \ketbra{\xi_{i_k}^m}{\xi_{i_k}^m})\rho_1(\theta)]=0,\]
   where, \(\ket{\phi_{i_k}^l}\) and \(\ket{\xi_{i_k}^m}\) are respectively the \(l^{th}\) and \(m^{th}\) eigenvectors of the operators \(A_{i_k}\) and \(B_{i_k}\). Therefore, \(\ket{\phi_{i_k}^l}\otimes\ket{\xi_{i_k}^m}\in\operatorname{ Kernel}(\rho_1(\theta))\). Since \(\rho_1(\theta)\perp\rho_2(\theta)\), we have, 
   \[\ket{\phi_{i_k}^l}\otimes\ket{\xi_{i_k}^m}\in\operatorname{ Span}(\rho_2(\theta)).\]
   But, for every \(\theta\in(0,\pi/2)\), the span of the state \(\rho_2(\theta)\) contains no product states \cite{duan2009super}.

  Similarly, one can argue that the span of \(\rho_1(\theta)\) also contains no product states, which leads us to conclude that the set \(\{\rho_1(\theta),\rho_2(\theta)\}\), for all \(\theta\in(0,\pi/2)\), do not admit even conclusive LRA. On the contrary, it is clear from the proof of Theorem \ref{t4}, that any set of orthogonal pure quantum states admits conclusive LRA, that is, for every question \(\mathcal{Q}_k\) asked to the spatially separated parties, they can atleast answer the question "{\tt No}" conclusively. 
  \end{proof}
At this juncture, it is important to note few points. From the proof of Proposition \ref{p3}, one might think that impossibility of conclusive LRA, in case of two orthogonal mixed states, boils down to impossibility of conclusive LOCC discrimination. In other words, for two mixed states, the phenomena of more conditional nonlocality with less purity and more nonlocality with less purity \cite{bandyopadhyay2011more} represent same aspect of nonlocality. However this is not the case. Indeed, the notion of more conditional nonlocality with less purity captures stronger notion of nonlocality. While it is obvious that, even for two mixed states, all the sets which does not admit conclusive LRA, also does not admit conclusive LOCC discrimination, the vice versa is not true. For example take note of the paradigmatic example of the states \(\sigma_1,\sigma_2\in\mathbb{C}^{3}\otimes\mathbb{C}^{3}\), where:
    \begin{align*}
        \sigma_1=\frac{1}{5}\sum_i|\phi_i\rangle_{B_1B_2}\langle\phi_i|,\quad \sigma_2= \frac{1}{4}\left[\mathbb{I}_9-\sum_i|\phi_i\rangle_{B_1B_2}\langle\phi_i|\right]
    \end{align*}
    The states \(\{\ket{\phi_i}_{B_1B_2}\}_i\) forms the unextendible product basis in \(\mathbb{C}^3\otimes\mathbb{C}^3\) and are given as:
    \begin{align*}
        \ket{\phi_1}_{B_1B_2} &= \frac{1}{\sqrt{2}}(\ket{0}\ket{0-1})_{B_1B_2},\\
        \ket{\phi_2}_{B_1B_2} &= \frac{1}{\sqrt{2}}(\ket{0-1}\ket{2})_{B_1B_2},\\        
        \ket{\phi_3}_{B_1B_2} &= \frac{1}{\sqrt{2}}(\ket{2}_{B_1}\ket{1-2})_{B_1B_2},\\
        \ket{\phi_2}_{B_1B_2} &= \frac{1}{\sqrt{2}}(\ket{1-2}\ket{0})_{B_1B_2},\\
        \ket{\phi_2}_{B_1B_2} &= \frac{1}{3}(\ket{0+1+2}\ket{0+1+2})_{B_1B_2}.\\
    \end{align*}
    It is easy to see that the set \(\{\sigma_1,\sigma_2\}\) does not allow conclusive LOCC discrimination \cite{bandyopadhyay2011more}. However, the set admits conclusive LRA. To perform this, Bob1, without loss of generality, can perform \(\{\ketbra{0_{B_1}}{0_{B_1}},\mathbb{I}_3-\ketbra{0_{B_1}}{0_{B_1}}\}\) measurement and Bob2 can perform \(\{1/2\ketbra{(0-1)_{B_2}}{(0-1)_{B_2}},\mathbb{I}_3-1/2\ketbra{(0-1)_{B_2}}{(0-1)_{B_2}}\}\) measurement. If the projectors corresponding to \(\ketbra{0_{B_1}}{0_{B_1}}\otimes1/2\ketbra{(0-1)_{B_2}}{(0-1)_{B_2}}\) clicks, then they will answer {\tt "No"} if the question is \(\mathcal{Q}_2\) and {\tt "Yes"} if the question is \(\mathcal{Q}_1\).  

For the convenience of the readers we summarize all the important implications between the LRA task and the traditional LOCC state discrimination task in Fig.\ref{fig:lra-locc}.
\begin{figure}[ht]
  \centering
  \resizebox{0.5\textwidth}{!}{%
  \begin{tikzpicture}[
    >=Stealth,              
    node distance=10em and 15em,
    every node/.style={font=\large}
  ]

    \node (CLRA)  {Complete LRA};
    \node (PLOCC) [right=of CLRA] {Perfect LOCCD};
    \node (PLRA)  [below=of CLRA] {Partial LRA};
    \node (CLOCC) [below=of PLOCC] {Conclusive LOCCD};
    \node (CLRA2) [above left=1.5cm and 1cm of CLOCC ] {Conclusive LRA};

    \draw[->,transform canvas={shift={(-6pt,0pt)}}] (CLRA.south) to node[pos=0.5,sloped,below]{Proposition\,\ref{p1}} (PLRA.north);
    \draw[->,transform canvas={shift={(6pt,0pt)}}] (PLRA.north) -- node[midway,pos=0.6]{\color{red}$\times$} (CLRA.south);

    \draw[->,transform canvas={shift={(0pt,4pt)}}] (PLOCC.west) -- node[above]{Theorem\,\ref{t1}} (CLRA.east);
    \draw[->,transform canvas={shift={(0pt,-4pt)}}] (CLRA.east) -- node[midway]{\color{red}$\times$} (PLOCC.west);

    \draw[->,dashed,transform canvas={shift={(6pt,0pt)}}] (PLOCC.south) -- (CLOCC.north);
    \draw[->,dashed,transform canvas={shift={(-6pt,0pt)}}] (CLOCC.north) -- node[midway]{\color{red}$\times$} (PLOCC.south);

    \draw[->,transform canvas={shift={(1pt,4pt)}}] (PLRA.east) -- node[above]{Lemma\,\ref{l1}} (CLOCC.west);
    \draw[->,transform canvas={shift={(1pt,-4pt)}}] (CLOCC.west) -- node[midway]{{\large \color{red}$\times$}} node[below]{Proposition\,\ref{p2}} (PLRA.east);

    \draw[->,transform canvas={shift={(8pt,0pt)}}] (CLOCC) to node[pos=0.5,sloped,above]{Theorem\,\ref{t4}} (CLRA2);
    \draw[->,transform canvas={shift={(-8pt,0pt)}}] (CLRA2) -- node[midway,sloped]{\color{red}$\times$}(CLOCC);

  \end{tikzpicture}%
  }
  \caption{A schematic diagram for the relation between LRA and LOCC discrimination. In the figure LOCCD abbreviates the phrase \textit{LOCC distinguishable}.  All the solid directed arrows represents the results  established in the present work, while the dashed arrows are already known implications. The \textcolor{red}{red} cross-marks are used to depict that the corresponding implications do not hold true.}
  \label{fig:lra-locc}
\end{figure}

\section{Implication of LRA: change point problem}\label{s4}
We will now highlight an important consequence of the task LRA, in the change point problem of the multipartite quantum states. The quantum change point problem, in a nutshell, deals with an imperfect device intended to prepare a specific quantum state $\ket{\psi}$ and occasionally interrupted to produce a faulty preparation \cite{akimoto2011,sentis2016quantum,sentis2016,sentis2017,sentis2018,fanizza2023,nakahira2023}. From the perspective of classical information processing such a mutated preparation will depict its ultimate disagreement with the expected features, whenever it becomes orthogonal to the target state. This motivates us to consider a device, which is targeted to prepare a $d$-dimensional multipartite state $\ket{\psi}\in\mathcal{H}_d$ and erroneously mutated to one of its orthogonal states in $\mathcal{S}_{\text{mutd.}}:=\{\ket{\phi_1},\ket{\phi_2},\cdots,\ket{\phi_{d-1}}\}$. From now on, we will use $\ket{\psi}$ as the target state and $\ket{\phi_i}$ as an element of the mutated set. These states are shared among multiple parties, who are allowed to perform LOCC and estimate the exact change point, as recently also considered in \cite{bannerjee2024}.

Operationally, a local quantum change point problem involves an $M$-length sequence of quantum states. These states generated by a device targeted to prepare the state $\ket{\psi}$, however gets mutated to any of the states in $\mathcal{S}_{\text{mutd.}}$ from the $k^{\text{th}}$ iteration, where $k\leq M$. Hence, the problem of change point estimation can be identified with a triplet $(\ket{\psi},\mathcal{S}_{\text{mutd.}}, M)$ and the goal is to estimate the number $k\leq M$ from which the state transformed to mutated state. It is then straightforward to assign this problem with an instance of discriminating the set of quantum states $\mathcal{\tilde{S}}_M:=\{\ket{\xi_k}=\ket{\psi}^{\otimes k}\otimes\ket{\phi_l}^{\otimes(M-k)}\}_{k=0}^{M}$, where $\ket{\phi_l}\in\mathcal{S}_{\text{mutd.}}$. Now assuming that the target state can be uniformly mutated to any elements of the set $\mathcal{S}_{\text{mutd.}}$, we can replace $\ket{\phi_l}$ by the maximally mixed state $\frac{1}{d-1}\mathbb{I}_{\mathcal{S}}$ in the subspace spanned by $\mathcal{S}_{\text{mutd.}}$ and hence each $\ket{\xi_k}$ can be generalized to $\rho_k:=\ketbra{\psi}{\psi}^{\otimes k}\otimes(\frac{1}{d-1}\mathbb{I}_{\mathcal{S}})^{\otimes(M-k)}$.

An obvious question prior to the perfect local estimation of the change point $k\leq M$ is to identify whether the device has at all mutated or not. This in terms of the task LRA leads to the following observation.
\begin{observation}\label{obs1}
    The necessary condition for locally solving a change point problem $(\ket{\psi},\mathcal{S}_{\text{mutd.}},M)$ is that the set $\{\rho_k\}_{k=0}^M$ admits partial LRA for the question $\mathcal{Q}_{M}$, i.e., {\tt "whether the joint state is $\rho_M=\ketbra{\psi}{\psi}^{\otimes M}$ or not."}
\end{observation}
Theorem \ref{t2}, together with Observation \ref{obs1}, immediately directs to infer the following Corollary.
\begin{corollary}\label{coro2}
    In a change point problem $(\ket{\psi},\mathcal{S}_{\text{mutd.}},M)$, it is always possible to detect whether mutation has occurred or not if the target state $\ket{\psi}$ is product.
\end{corollary}
\begin{proof}
    Note that for any change point problem $(\ket{\psi},\mathcal{S}_{\text{mutd.}},M)$, all possible states in the set  $\mathcal{\tilde{S}}_{M}:=\{\rho_k\}_{k=0}^M$ are mutually orthogonal. Now, if the intended state $\ket{\psi}$ is product, Theorem \ref{t2} implies that $\ketbra{\psi}{\psi}^{\otimes M}$ is always possible to authenticate in the set $\mathcal{\tilde{S}}_{M}$ and hence the question $\mathcal{Q}_{M}$ can be answered perfectly. This completes the proof.
\end{proof}
  We now turn to the problem of local change point estimation for a set of multipartite quantum states and derive the necessary and sufficient conditions for its successful implementation. Naturally, an optimal strategy would involve a joint measurement across the local constituents of all $M$ copies of the multipartite state. However, such joint measurements typically require entangling operations and are therefore highly resource-intensive. Conversely, the simplest and most economical measurement to be performed on each of the subsystems, are adaptive strategies. Adaptive strategies are very useful in terms of resource consumption in any information theoretic tasks and has been studied extensively in various domain of Quantum information processing, such as state discrimination \cite{banik2019multicopy}, quantum channel discrimination \cite{Hayashi2009,Harrow2010,Pirandola2019} and quantum metrology and error estimation \cite{Pirandola2017,Cope2017}. In what follows, we introduce the notion of a local adaptive strategy in the context of change point problems. 
\begin{definition}
    (Local Adaptive Strategy) Given a change point problem $(\ket{\psi},\mathcal{S}_{\text{mutd.}},M)$, the local adaptive strategy is a sequential LOCC protocol in which each party measures the $j^{th}$ state in the sequence individually for each $j\in\{1,2,…,M\}$, with the goal of extracting maximal information about the change point. The measurement performed at the $j^{th}$ step is allowed to depend on the classical outcomes of all previous measurements, i.e., those from rounds $l<j$. Thus, the strategy, in each step, adapts dynamically based on the classical information accumulated from the previous rounds, while remaining restricted to LOCC.
\end{definition}
Under the action of such an adaptive LOCC implementable change point estimation problem, we obtain the following necessary and sufficient criterion for perfect accomplishment of the task. 
\begin{theorem}\label{t5}
    A change point problem $(\ket{\psi},\mathcal{S}_{\text{mutd.}},M)$ is locally solvable by adaptive strategy if and only if the set $\mathcal{S}=\{\ket{\psi}\}\cup\mathcal{S}_{\text{mutd.}}$ admits partial LRA at least for the question $\mathcal{Q}_{\psi}$.
\end{theorem}
\begin{proof}
    The `\textit{if}'-part follows directly by recalling the task of LRA for the question $\mathcal{Q}_{\psi}$. In each run, the question $\mathcal{Q}_{\psi}$ asks {\tt "whether the shared state is $\ket{\psi}$ or, not?"} and by adapting a LOCC implementable protocol $\mathcal{M}_{\text{LOCC}}$, the involved parties can answer this question perfectly. Now for every instances of the total $M$-length sequence they will apply the protocol $\mathcal{M}_{\text{LOCC}}$ and continued further to the next instance if the result is $1$ ({\tt `Yes'}), i.e., if the state is $\ket{\psi}$. Finally, if the result is $0$ ({\tt `No'}) in some $j\leq M$ instance, they can perfectly identify $j$ as the change point. Notably, they need not to proceed further after getting the $0$ outcome of the measurement $\mathcal{M}_{\text{LOCC}}$.

    The `\textit{only if}' direction is rather less straightforward, due to the fact that in general an adaptive strategy admits highly complex structure involving the information acquired from all of the previous rounds. However, we will argue that for estimating the change point perfectly, any local adaptive strategy can be replaced by a constant LOCC-implementable protocol on each of the $M$ sequences.

    The rationale behind this is that the source we are dealing with follows identical and independent distribution (i.i.d),  sampled over $\{\ket{\psi}\}\cup\mathcal{S}_{\text{mutd.}}$ upto the change point $k\leq M$ and thereafter a uniform i.i.d over $\mathcal{S}_{\text{mutd.}}$. Now, suppose in the $i^{\text{th}}$ round, with all the information acquired from the previous rounds, the distant parties are concluded by assigning non-zero probabilities over the set $\{\ket{\psi}\}\cup\{\ket{\phi_1},\cdots, \ket{\phi_l}\}$, where the set $\{\ket{\phi_1},\cdots, \ket{\phi_l}\}\subseteq\mathcal{S}_{\text{mutd.}}$ with $l\leq (d-1)$. However, due to the i.i.d structure the state in the $(i+1)^{\text{th}}$ round is again sampled over $\{\ket{\psi}\}\cup\mathcal{S}_{\text{mutd.}}$. Evidently, the information acquired from the $i^{\text{th}}$ round then has no implication to adapt a new LOCC protocol in the $(i+1)^{\text{th}}$ round. Following the same argument for all possible $i\leq k\leq M$, one can then conclude that the dynamical adaptive strategy is no way better than a constant LOCC protocol for each round.

    Now, suppose for the change point problem $(\ket{\psi},\mathcal{S}_{\text{mutd.}}, M)$, there is a adaptive LOCC protocol, which can perfectly identify the change point. Therefore, for each of the rounds the protocol must (atleast) identify whether the state is $\ket{\psi}$, or $\frac{1}{d-1}\mathbb{I}_{\mathcal{S}}$. Hence the same LOCC protocol can be adapted to locally authenticate the state $\ket{\psi}$, given from the set $\mathcal{S}=\{\ket{\psi}\}\cup\mathcal{S}_{\text{mutd.}}$. This completes the proof.
\end{proof}
We now state two important corollaries that characterize the structure of quantum states which either permit or prohibit local adaptive estimation in change point problems. The proofs of Corollaries \ref{coro3} and \ref{coro4} follow directly from Theorems \ref{t2} and \ref{t3}, respectively, in conjunction with Theorem \ref{t5}.
\begin{corollary}\label{coro3}
    A change point problem $(\ket{\psi},\mathcal{S}_{\text{mutd.}},M)$ is always possible to solve via local adaptive strategy if the state $\ket{\psi}$ is product.
    \end{corollary}

\begin{corollary}\label{coro4}
    A change point problem $(\ket{\psi},\mathcal{S}_{\text{mutd.}},M)$ can never be solved via local adaptive strategy if the set $\mathcal{S}:=\{\ket{\psi}\}\cup\mathcal{S}_{\text{mutd.}}$ forms an orthonormal basis in $\bigotimes_{k=1}^N \mathbb{C}^{d_k}$ consisting of all entangled states.
\end{corollary}
\section{Discussions}\label{s5}
In summary, we have explored local authentication of a classical index encoded in a quantum state, with respect to another independently asked index. The premise of the present task (LRA) can be seen as an initiative to highlight the role of an external classical input in context of LOCC state discrimination. Similar inclusion in classical communication paradigm \cite{wiesner1983conjugate, ambainis1999dense, ambainis2002dense}  gives rise to several distinctive features compared to the conventional one. Most remarkably introduction of inputs in communication unlocks the advantage of qudit communication \cite{Ambainis2009,Montina2011,Spekkens2009,Tavakoli2016,chowdhury2025gottesman}, which is otherwise equivalent to a cdit \cite{holevo1973bounds, frenkel2015classical}. Similarly, while in conventional LOCC discrimination, even product ensemble is known to exhibit nonlocality, introduction of inputs makes nonlocality truly sensitive to entanglement. This results in the absence of Nonlocality Without Entanglement like phenomena, making LRA the only known LOCC discrimination task that certifies entanglement.

We have further shown that impossibility of LRA reveals stronger notion of nonlocality compared to LOCC discrimination. Interestingly we have reported that product states---though authenticable locally---can activate conditional nonlocality in an otherwise local ensemble, a feat unattainable with entanglement. Moreover, we have established that full entangled basis in every dimension is conditionally nonlocal. Besides of its foundational interests, our work bears important implications in context of learning an erroneous device, targeted to prepare a particular multipartite quantum state. Specifically, the perfect estimation of the error in the device is possible whenever it is targeted to prepare product states. However, if the device produces all entangled states, both for the targeted and the mutated one, we show under adaptive LOCC such a device is forbidden to estimate.

A number of promising open research directions stem out as direct consequence of our findings. First, the type of the input questions considered here, is the simplest one, i.e., about the identity of the state. However, there could be various other possible input questions, viz., parity identification, subspace discrimination etc., which may differ by their local answerable complexities. Second, the present framework only considers the fully local operations along with unrestricted classical communication, both of which can be tuned further depending upon practical situations and causal constraints. More precisely, in the multipartite settings, the various forms of party accumulations, along with specific directions of classical communication may lead to identifications of different resources, viz., bi-separability, genuine entanglement etc. Third, due to the lack of concise mathematical description of LOCC, the question of local state discrimination is often asked in the paradigm of separable measurements \cite{yu2012four,duan2009distinguishability,bandyopadhyay2015limitations,cohen2015class} or positive partial transpose (PPT)-preserving operations \cite{yu2014distinguishability}. Similarly, exploring the status of the task LRA in such broader classes of operations can be noteworthy. Fourth, it is instructive that for the two-qubit case, our results point out the maximum cardinality of a locally authenticable set with all entangled members is exactly \textit{three} (see Theorem \ref{t1} and \ref{t3}). However, the same in general bipartite settings is not known. 
Finally, we are hopeful to have important implications of the local inaccessibility to authenticate the randomized quantum states in the cryptographic context, including the stronger notions of data hiding and secret sharing.

\section{Acknowledgments}
We thank Lu Yingying for their comment in the earlier version of Proposition 2. We thankfully acknowledge fruitful discussions with Guruprasad Kar, Manik Banik and Amit Mukherjee. Tamal Guha is supported by Hong Kong Research Grant Council (RGC) through Grant No. 17307719 and 17307520. SRC acknowledges support from University Grants Commission(reference no. 211610113404), India.

    \bibliographystyle{unsrt}
\bibliography{bibliography}
\end{document}